\documentclass[12pt]{article}

\usepackage[utf8]{inputenc}
\usepackage[english]{babel}
\usepackage[margin=1in]{geometry}
\usepackage[dvipsnames,table]{xcolor}
\usepackage{amssymb,amsmath,amsthm,setspace,graphicx}
\usepackage{booktabs}
\usepackage{multirow}
\usepackage{rotating}
\usepackage{array}
\usepackage{siunitx}
\usepackage[justification=centering]{caption}
\usepackage[authoryear]{natbib}
\usepackage{url}
\usepackage{float}
\usepackage[section]{placeins}
\usepackage{mathtools}
\usepackage{enumitem}
\usepackage[normalem]{ulem}
\usepackage{algorithm}
\usepackage{algpseudocode}
\usepackage[pdfborder={0 0 0},final=true,colorlinks=true,breaklinks=true,linkcolor=red,citecolor=blue]{hyperref}
\usepackage[capitalize,nameinlink,noabbrev]{cleveref}
\usepackage{relsize}
\usepackage{setspace}

\newcommand{\tightdisplayspacing}{%
	\setlength{\abovedisplayskip}{0.35\baselineskip}%
	\setlength{\belowdisplayskip}{0.35\baselineskip}%
	\setlength{\abovedisplayshortskip}{0.20\baselineskip}%
	\setlength{\belowdisplayshortskip}{0.20\baselineskip}%
	\setlength{\jot}{2pt}%
}
\AtBeginDocument{\tightdisplayspacing}

\renewenvironment{abstract}{%
	\begin{center}%
		\textbf{\abstractname}%
		\vspace{0.5\baselineskip}%
	\end{center}%
	\begin{singlespace}%
		\noindent\ignorespaces%
	}{%
	\end{singlespace}%
}

\numberwithin{equation}{section}

\DeclareMathOperator*{\argmax}{arg\,max}
\DeclareUnicodeCharacter{2212}{\ensuremath{-}}

\theoremstyle{plain}
\newtheorem{thm}{Theorem}[section]
\crefname{thm}{theorem}{theorems}
\Crefname{thm}{Theorem}{Theorems}

\crefname{cor}{corollary}{corollaries}
\Crefname{cor}{Corollary}{Corollaries}
\newtheorem{prop}{Proposition}
\Crefname{prop}{Proposition}{Propositions}
\newtheorem{lem}{Lemma}
\crefname{lem}{lemma}{lemmas}
\Crefname{lem}{Lemma}{Lemmas}
\theoremstyle{definition}
\newtheorem{axiom}{Axiom}[section]
\crefname{axiom}{axiom}{axioms}
\Crefname{axiom}{Axiom}{Axioms}

\crefname{defn}{definition}{definitions}
\Crefname{defn}{Definition}{Definitions}
\theoremstyle{remark}
\newtheorem{rem}{Remark}[section]
\crefname{rem}{remark}{remarks}
\Crefname{rem}{Remark}{Remarks}

\definecolor{lightgrey}{rgb}{0.83,0.83,0.83}
\graphicspath{{./}{../}}

\begin{document}
	
	\pagenumbering{gobble}
	
	\begin{singlespace}
		
		\title{\vspace{-4ex}\textbf{Reference Dependence and the Structure of the WTA/WTP Gap}}
		\author{G. Charles-Cadogan
			\thanks{School of Accounting and Finance, University of Leicester; Tel: +44 (0116) 229 7385; e-mail: \href{mailto:gocadog@gmail.com}{gocadog@gmail.com}
			\\~\\
			The author has no conflicts of interest--financial or otherwise associated with this paper.\\~\\
			Declaration of AI use. During the preparation of this manuscript, the author used generative AI tools to assist with language editing and stylistic refinement. The author reviewed and edited all content and takes full responsibility for the final manuscript.
		} 
	} 
		
		\date{\today}
		\maketitle
		
		\begin{abstract}
			\noindent This paper studies the willingness-to-accept/willingness-to-pay (WTA-WTP) gap under objective probabilities. Preferences over finite lotteries satisfy completeness, transitivity, continuity, weak independence, reference partition, and range dependence. Weak independence requires von Neumann-Morgenstern independence only for mixtures that preserve the reference point and do not move outcomes across the induced gain-loss partition. The representation, weak rank-dependent utility (WRDU), evaluates gains and losses by separate subutilities anchored at the reference point and recombines them through a range-dependent Lagrangian penalty coefficient $\rho$ on the loss-side component. The reciprocal index $\lambda=1/\rho$ reports the WTA-WTP loss-aversion convention. The main result characterizes a normalized admissible transaction class in which the WTA-WTP gap follows from the asymmetric buying and selling indifference equations. In this class, $\lambda>1$ suppresses WTP and elevates WTA, while $\rho>1$ corresponds to gain seeking or loss attenuation. A fixed reciprocal loss-aversion index has no internal mechanism that makes the wedge converge to zero as transaction scale changes; attenuation requires a transaction path on which $\rho$ and $\lambda$ converge to their common neutral value one. The analysis gives a decision-theoretic account of the endowment-effect wedge based on weakened independence, reference anchoring, semi-affine subutility normalization, and range-dependent penalization. The result is distinct from the Rabin calibration implication and does not rely on constant-relative-risk-aversion utility or probability weighting.
			
			\vspace{1ex}
			\noindent\emph{Keywords:} reference dependence, loss aversion, endowment effect, WTA-WTP gap, range-dependent loss aversion, structural explanation
			
			\vspace{1ex}
			\noindent\emph{JEL Classification Codes:} C02, D03, D81
		\end{abstract}
		
	\end{singlespace}
	
	\newpage
	\pagenumbering{roman}
	\thispagestyle{empty}
	\singlespace
	\tableofcontents
	\onehalfspacing
	
	\newpage
	\pagenumbering{arabic}
	\hypersetup{pageanchor=true}
	\tightdisplayspacing
	
	\section{Introduction}
	
	The endowment effect--the systematic disparity between willingness to accept (WTA) and willingness to pay (WTP)--is one of the most robust empirical regularities in behavioral economics. Beginning with the seminal experiments of \citet{KahnemanKnetschThaler1990}, researchers have consistently found that individuals demand more to give up a good than they are willing to pay to acquire it. Meta-analyses \citep{HorowitzMcConnell2002,TuncelHammitt2014} report large WTA-WTP ratios, especially for non-market goods. The gap remains theoretically important because standard equilibrium reasoning assigns the same value to the same object when income, ownership, and transaction costs are properly controlled. See also \citet{Frederick2012,Depoorter2025}.
	
	The standard explanation in Cumulative Prospect Theory (CPT) \citep{TverKahn1992} attributes this gap to an exogenous loss-aversion parameter $\lambda > 1$: losses loom larger than gains. However, this explanation faces two difficulties. First, it provides no structural reason for why the gap should vary with transaction scale. Second, recent evidence \citep{ChapmanDeanOrtolevaSnowbergCamerer2024} shows that the endowment effect and loss aversion for risk are decoupled.
	
	This paper provides a decision-theoretic account of the WTA-WTP gap using weak rank-dependent utility (WRDU). The model is abstract and does not depend on constant-relative-risk-aversion (CRRA) utility or any other parametric form. The result is stated for a normalized admissible transaction class. Within that class, the gap arises from the structural asymmetry of the indifference conditions for buying and selling. Equation \eqref{eq:WRDU_Representation} uses the Lagrangian penalty coefficient $\rho$ on the loss-side component. The reciprocal index $\lambda=1/\rho$ is then used to report WTA-WTP comparative statics in the standard loss-aversion convention: loss aversion corresponds to $\lambda>1$, while $\lambda<1$ corresponds to gain seeking or loss attenuation. Under comparable gain and loss subutilities, $\lambda>1$ lowers WTP and raises WTA.
	
	Crucially, the Lagrangian coefficient $\rho$ is not fixed. It is selected by the reference-point problem and varies with the transaction range. Alessandro Cerboni characterized the range as the ``emotional distance". Let $m$ denote transaction scale, let $\rho(m)$ denote the Lagrangian penalty coefficient evaluated along that transaction path, and let $\lambda(m)=1/\rho(m)$ denote the reciprocal loss-aversion index used in the WTA-WTP equations. For the WTA-WTP application, attenuation means that $\lambda(m)$ and $\rho(m)$ both move toward the neutral value one as the transaction becomes familiar, large, or market-like. This is distinct from the Rabin calibration result, where the large-stakes implication concerns favorable risky gambles rather than the buy-sell wedge studied here.
	
	The paper is organized as follows. \cref{sec:LitRev} briefly reviews the relevant literature on WTA-WTP disparities. \cref{sec:WTA_WRDU_Axioms} states the axiomatic foundations of WRDU. \cref{sec:WRDUmodel_specification} presents the WRDU model for WTA and WTP. \cref{sec:AbstractResolveWTA_WTP_gap} derives the abstract structural resolution of the WTA-WTP gap. \cref{sec:Illustrative_example} provides a concrete illustrative example. \cref{sec:Conclusion} concludes. Proofs and supplementary material are collected in the Appendix.
	
	\section{Literature Review}\label{sec:LitRev}
	
	The WTA-WTP gap has been extensively documented. \citet{KahnemanKnetschThaler1990} found median WTA of \$7.12 versus median WTP of \$2.87 for coffee mugs. \citet{HorowitzMcConnell2002} reported average WTA-WTP ratios around 7:1.
	
	Several explanations have been proposed. The standard account in Prospect Theory attributes the gap to loss aversion: the pain of losing a good exceeds the pleasure of gaining it \citep{KahnTver1979,TverKahn1992}. However, recent evidence challenges this view. \citet{ChapmanDeanOrtolevaSnowbergCamerer2024} found no correlation between the endowment effect and loss aversion for risk in large representative samples, suggesting the two phenomena are decoupled.
	
	Alternative explanations include substitution effects \citep{Hanemann1991}, market experience \citep{List2003}, and information effects \citep{WeaverFrederick2012}. More recent work by \citet{CerreiaVioglioDillenbergerOrtoleva2024} develops Cautious Utility, where the gap arises from uncertainty about trade-offs and caution rather than loss aversion.
	
	The WRDU approach differs from these alternatives by providing a structural mechanism rooted in utility curvature asymmetry and range-dependent penalization. On the normalized admissible transaction class, it explains how the gap can arise and how it can attenuate with transaction scale, without introducing a fixed Lagrangian penalty coefficient or a fixed reciprocal loss-aversion index. This places WRDU within the recent literature on reference-dependent preferences and behavioral welfare analysis \citep{EricksonFuster2014}.
	
	\section{Axiomatic Foundations of WRDU}\label{sec:WTA_WRDU_Axioms}
	
	This section makes the WTA-WTP application self-contained by stating the preference restrictions used by WRDU. Let $\mathcal L(X)$ denote the set of finite lotteries over a compact outcome interval $X$, and let $\succeq$ be a preference relation on $\mathcal L(X)$. The axioms preserve objective probabilities, weaken independence only where reference dependence requires it, and impose an endogenous gain-loss partition.
	
	\subsection{Preference Axioms}\label{subsec:PrefAxioms}
	
	\begin{axiom}[Completeness]\label{axiom:WTA_Completeness}
		For any $L,M\in\mathcal L(X)$, either $L\succeq M$, $M\succeq L$, or both.
	\end{axiom}
	
	\begin{axiom}[Transitivity]\label{axiom:WTA_Transitivity}
		For any $L,M,N\in\mathcal L(X)$, if $L\succeq M$ and $M\succeq N$, then $L\succeq N$.
	\end{axiom}
	
	\begin{axiom}[Continuity]\label{axiom:WTA_Continuity}
		For any $L,M,N\in\mathcal L(X)$, the sets
		\begin{align}
		&\{\alpha\in[0,1]:\alpha L+(1-\alpha)M\succeq N\},\\
		&\{\alpha\in[0,1]:N\succeq \alpha L+(1-\alpha)M\}
		\end{align}
		are closed.
	\end{axiom}
	
	\begin{axiom}[Weak Independence]\label{axiom:WTA_WeakIndependence}
		The preference relation satisfies the following two restrictions.
		
		\smallskip
		\noindent\textup{(i) Partition-preserving mixture independence.}
		Let $L,M,N\in\mathcal L(X)$ and $\alpha\in[0,1]$. If $L\succeq M$, and if mixing both $L$ and $M$ with $N$ leaves the reference point unchanged and does not move any outcome across the gain-loss partition, then
		\begin{equation}
		\alpha L+(1-\alpha)N\succeq \alpha M+(1-\alpha)N.
		\end{equation}
		
		\smallskip
		\noindent\textup{(ii) Gain-loss lottery-pair equivalence.}
		Fix a lottery $L=\{(x_i,p_i)\}_{i=1}^n$ with support $X$ and reference point $x_r$. Let
		\begin{align}
			P_g(L)&=\sum_{x_i\in C_g(X)}p_i,\\
			P_\ell(L)&=\sum_{x_i\in C_\ell(X)}p_i,\\
			P_r(L)&=1-P_g(L)-P_\ell(L).
		\end{align}
		Let $C_g(X)$ denote the gain-side component induced by the support, $C_\ell(X)$ the loss-side component, and let $\widehat C_r(X)$ denote the reference cell, which reduces to $\{x_r\}$ in the WTA-WTP transaction application. The compound lottery that keeps the gain, loss, and reference components separated is equivalent to the simple reference lottery only through the penalized gain-loss balance:
		\begin{equation}
			\left\{(P_r(L),\widehat C_r(X))\right\}
			\sim
			\left\{(P_g(L),C_g(X));\,
			(P_\ell(L),-\rho(x_r;y,z)C_\ell(X))\right\}.
			\label{eq:WTA_GainLossPairEquivalence}
		\end{equation}
		Equivalently, the reference component aggregates with the gain-side and loss-side components according to
		\begin{equation}
			P_r(L)\widehat C_r(X)
			\sim
			P_g(L)C_g(X)
			-\rho(x_r;y,z)P_\ell(L)C_\ell(X).
		\end{equation}
		Thus independence is imposed within a fixed reference partition, while cross-partition aggregation is mediated by the range-dependent Lagrangian penalty coefficient $\rho(x_r;y,z)$. This is the WTA-WTP analogue of the generalized lottery-pair equivalence in \citet{CharlesCadogan2016}. The associated reciprocal index is $\lambda(x_r;y,z)=1/\rho(x_r;y,z)$.
	\end{axiom}
	
	\begin{axiom}[Reference Partition]\label{axiom:WTA_ReferencePartition}
		For each choice problem with support $X$, there exists a reference point $x_r\in X$ such that
		\begin{equation}
		X=C_\ell(X)\cup\{x_r\}\cup C_g(X),
		\end{equation}
		where $C_\ell(X)=\{x\in X:x<x_r\}$ and $C_g(X)=\{x\in X:x>x_r\}$. The reference point is selected as a maximizer of the Lagrangian penalty functional
		\begin{equation}
		x_r\in\argmax_{r\in X}\left[v_g(z(r))-\rho(r)v_\ell(y(r))\right],
		\end{equation}
		where $y(r)$ and $z(r)$ are the relevant loss and gain magnitudes induced by the candidate reference point $r$, and $\rho(r)=1/\lambda(r;y(r),z(r))$ is the Lagrangian penalty coefficient. If the first-order condition has several roots, the root with the largest Lagrangian value is selected.
	\end{axiom}
	
	\begin{axiom}[Range Dependence]\label{axiom:WTA_RangeDependence}
		For relevant loss and gain magnitudes $y,z>0$, the virtual loss-aversion index $\lambda(x_r;y,z)>0$ satisfies
		\begin{align}
		\frac{\partial\lambda}{\partial z}&>0,\\
		\frac{\partial\lambda}{\partial y}&<0,
		\end{align}
		and varies with the transaction range rather than remaining fixed across choice problems. The neutral benchmark is $\lambda(x_r;y,z)=1$, equivalently $\rho(x_r;y,z)=1/\lambda(x_r;y,z)=1$.
	\end{axiom}
	
	\subsection{Representation}\label{subsec:RepresentationTheorem}
	
	\begin{thm}[WRDU Representation]\label{thm:WRDURepresentation}
		Let $\succeq$ be a preference relation on $\mathcal L(X)$ satisfying
		Completeness, Transitivity, Continuity, Weak Independence,
		Reference Partition, and Range Dependence.
		
		Suppose, in addition, that the region-wise representatives are selected
		from the admissible strictly concave $C^2$ WRDU transaction class.
		Then on mixtures that preserve the reference partition,
		there exist strictly increasing, strictly concave functions
		$v_g,v_\ell:\mathbb R_+\to\mathbb R_+$ with
		$v_g(0)=v_\ell(0)=0$
		and a positive range-dependent Lagrangian penalty $\rho(x_r;y,z)$, with $\lambda(x_r;y,z)=1/\rho(x_r;y,z)$,
		such that preferences admit the representation
		\begin{equation}
			V^\rho_{WRDU}(L)=
			\sum_{x\in C_g(X)}p_xv_g(x-x_r)
			-\rho(x_r;y,z)
			\sum_{x\in C_\ell(X)}p_xv_\ell(x_r-x).
		\end{equation}
		
		Moreover:
		
		(i) $(v_g,v_\ell,\rho,\lambda)$ are pinned down once the reference-point normalization and a common scale are fixed, with $\lambda=1/\rho$.
		
		(ii) Independent additive transformations of $v_g$ and $v_\ell$ are not admissible.
		
		(iii) If $\rho$, equivalently $\lambda=1/\rho$, is constant across transaction ranges,
		preferences reduce to a fixed-Lagrangian-penalty reference-dependent model.
	\end{thm}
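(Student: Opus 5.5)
The plan is to build the representation region by region and then glue the regions together using the gain-loss lottery-pair equivalence. First, fix a choice problem with support $X$. By \cref{axiom:WTA_ReferencePartition} there is a reference point $x_r$, selected as the maximizer of the Lagrangian penalty functional, and it splits $X$ into $C_\ell(X)$, $\{x_r\}$ and $C_g(X)$. Restrict $\succeq$ to lotteries supported in $\{x_r\}\cup C_g(X)$, and separately to lotteries supported in $\{x_r\}\cup C_\ell(X)$. On each of these sets, every mixture leaves the reference point unchanged and moves no outcome across the partition. So part (i) of \cref{axiom:WTA_WeakIndependence}, together with \cref{axiom:WTA_Completeness}, \cref{axiom:WTA_Transitivity} and \cref{axiom:WTA_Continuity}, gives the full von Neumann--Morgenstern hypotheses on each region. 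The mixture-space theorem then yields expected-utility representations $u_g$ on the gain region and $u_\ell$ on the loss region, each unique up to a positive affine transformation.

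Next I normalize within each region and pick representatives. Define $v_g(z)=u_g(x_r+z)-u_g(x_r)$ for $z\ge 0$, and define $v_\ell(y)$ as the disutility of moving from $x_r$ down to $x_r-y$, again measured relative to $x_r$. This anchors $v_g(0)=v_\ell(0)=0$ and removes the additive constants. The additional hypothesis of the theorem, that region-wise representatives come from the admissible strictly concave $C^2$ class, is what delivers the strict monotonicity and strict concavity of $v_g$ and $v_\ell$. This is an assumption, not something derived, and I will say so explicitly.

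The main obstacle is the cross-partition aggregation, since the two regional representations carry unrelated scales. Here I would use part (ii) of \cref{axiom:WTA_WeakIndependence}. Take a general lottery and decompose it into its gain component with mass $P_g(L)$, its loss component with mass $P_\ell(L)$, and its reference cell. Evaluate each component with its own regional expected utility. Equation \eqref{eq:WTA_GainLossPairEquivalence} then says that the reference component, which has value zero under the normalization, is balanced by the gain value minus $\rho(x_r;y,z)$ times the loss value. Taking this as the definition of the aggregator gives $V^\rho_{WRDU}$. The remaining task is to check that $V^\rho_{WRDU}$ orders partition-preserving mixtures exactly as $\succeq$ does. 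Within a region this follows from the regional expected-utility representation. Across regions it comes from the lottery-pair equivalence, which is exactly where $\rho$ acts as the exchange rate between the two regional scales. Positivity of $\rho$ and its range dependence come from \cref{axiom:WTA_RangeDependence} through $\rho=1/\lambda$.

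For the uniqueness claims, I fix one common positive scale $c$ applied to both $v_g$ and $v_\ell$. Scaling only one of them by $c$ would have to be offset by rescaling $\rho$, and that freedom disappears once the common scale is fixed; this proves (i). For (ii), adding a constant to $v_g$ alone would break $v_g(0)=0$. Restoring that normalization while keeping the balance in \eqref{eq:WTA_GainLossPairEquivalence} would move the zero of the aggregated value, and with it the argmax that defines $x_r$. So independent additive shifts of $v_g$ and $v_\ell$ are not admissible. For (iii), if $\rho$ is constant, the representation is simply a reference-dependent value function with a fixed loss coefficient $\lambda=1/\rho$.
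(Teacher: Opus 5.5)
Your proposal is correct and follows the paper's own proof essentially step for step. The shared steps are: region-wise von Neumann--Morgenstern representations from partition-preserving independence; the anchoring $v_g(0)=v_\ell(0)=0$, with strict concavity taken from the admissible class rather than derived; gluing across regions through the gain-loss lottery-pair equivalence, with $\rho$ acting as the exchange rate between the two regional scales; and uniqueness up to a common positive scale once the regional scales are tied together.

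One small caution on (ii): a uniform shift $v_g\mapsto v_g+b$ moves $\mathcal L(r;\rho)$ by the constant $b$ and so leaves its argmax unchanged. The sharper reason such shifts are inadmissible is that they change $V^\rho_{WRDU}(L)$ by $bP_g(L)$, which varies across lotteries and can reverse cross-partition rankings. The paper itself rests (ii) only on the anchoring normalization, as your first sentence does.
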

	
	The proof is given in \Cref{app:Proof_WRDURepresentation}.
	
	\begin{rem}[Why Standard von Neumann-Morgenstern Affine Freedom Does Not Apply Separately]\label{rem:WTA_SemiAffine}
		The original utility representation is affine in the usual von Neumann-Morgenstern sense. After the reference split, however, $v_g$ and $v_\ell$ are anchored at $x_r$. Positive scaling remains admissible, but independent translations would move the anchors and change the gain-loss decomposition. The subutilities are therefore semi-affine representatives; this is the sense in which anchoring matters for WTA-WTP comparisons \citep{TverskyKahneman1974}.
	\end{rem}
	
	\begin{rem}[Scope of the Axioms]\label{rem:WTA_AxiomScope}
		The WTA-WTP result below is not a global claim about all models of reference dependence. It is a theorem about the normalized WRDU transaction class satisfying the stated axioms. The stronger endpoint restrictions used in the Allais and Rabin applications are not needed for the WTA-WTP wedge; here the relevant condition is whether $\lambda=1/\rho$ lies above, below, or near its neutral value. The coefficient $\rho$ is the Lagrangian penalty in \eqref{eq:WRDU_Representation}; it is also useful as a diagnostic for gain-seeking or excessive-risk-seeking paths when $\lambda<1$.
	\end{rem}
	
	\begin{rem}
		Uniqueness up to common positive scaling reflects the semi-affine structure induced by reference anchoring. Because $v_g(0)=v_\ell(0)=0$, independent translations would alter the gain-loss partition and therefore change the preference ordering. This distinguishes the WRDU class from standard von Neumann-Morgenstern representations.
	\end{rem}
	
	\section{The WRDU Model}\label{sec:WRDUmodel_specification}
	
	This section applies the axiomatic representation in \Cref{thm:WRDURepresentation} to WTA and WTP.
	
	\subsection{Basic Setup}
	
	Let $X$ be a compact interval of prizes. A finite lottery is denoted by $L=\{(x_i,p_i)\}_{i=1}^n$, where $x_i\in X$, $p_i\ge0$, and $\sum_i p_i=1$. Equivalently, write $p_x$ for the probability assigned to outcome $x$ in the support of $L$. The WRDU representation splits the payoff support at an endogenous reference point $x_r$ into gain and loss regions:
	
	\begin{equation}
	X = C_\ell(X) \cup \{x_r\} \cup C_g(X),
	\end{equation}
	
	where $C_\ell(X)=\{x\in X:x<x_r\}$ contains outcomes below the reference point and $C_g(X)=\{x\in X:x>x_r\}$ contains outcomes above the reference point. For any $x\in C_\ell(X)$, the loss magnitude is $x_r-x$; for any $x\in C_g(X)$, the gain magnitude is $x-x_r$.
	
	\subsection{The Penalized Utility Representation}\label{subsec:PenalizedUtility}
	
	Define the Lagrangian penalty coefficient by
	\begin{equation*}
		\rho(x_r;y,z)=\frac{1}{\lambda(x_r;y,z)}.
	\end{equation*}
	The WRDU value of a lottery $L$ is:
	
	\begin{equation}
		V^\rho_{WRDU}(L)
		=
		\sum_{x \in C_g(X)} p_x v_g(x-x_r)
		-
		\rho(x_r;y,z)
		\sum_{x \in C_\ell(X)} p_x v_\ell(x_r-x),
		\label{eq:WRDU_Representation}
	\end{equation}
	
	where:
	
	\begin{itemize}
		\item $v_g:\mathbb{R}_+ \to \mathbb{R}_+$ is the gain subutility, strictly increasing and strictly concave ($v_g'' < 0$), with $v_g(0) = 0$;
		\item $v_\ell:\mathbb{R}_+ \to \mathbb{R}_+$ is the loss subutility, defined on loss magnitudes, strictly increasing and strictly concave ($v_\ell'' < 0$), with $v_\ell(0) = 0$;
		\item $y$ denotes the relevant loss magnitude, $z$ denotes the relevant gain magnitude, $\rho(x_r;y,z)$ is the Lagrangian penalty on the loss-side component, and $\lambda(x_r;y,z)=1/\rho(x_r;y,z)$ is the corresponding range-dependent loss-aversion index under the reciprocal normalization.
	\end{itemize}
	
	The coefficient $\rho$ is the penalty coefficient in the Lagrangian representation. The reciprocal $\lambda=1/\rho$ is the associated loss-aversion index used when the model is translated into the standard WTA-WTP comparison below. The neutral benchmark is $\lambda=\rho=1$.
	
	\subsection{The Lagrangian Penalty and Reference-Point Selection}\label{subsec:LagrangianPenalty}
	
	The Lagrangian penalty coefficient has a direct optimization interpretation. For a candidate reference point $r\in X$, let $z(r)$ denote the relevant gain-side magnitude and let $y(r)$ denote the relevant loss-side magnitude induced by the partition at $r$. The reference-point problem can be written as the Lagrangian relaxation
	\begin{equation}
		\mathcal L(r;\rho)=v_g(z(r))-\rho(r)v_\ell(y(r)),
		\qquad \rho(r)\ge0.
		\label{eq:WTA_LagrangianPenalty}
	\end{equation}
	The multiplier $\rho(r)$ is the shadow-price penalty on the loss-side component. Writing
	\begin{equation}
		\rho(r)=\frac{1}{\lambda(r;y(r),z(r))}
		\label{eq:WTA_RhoLambdaDuality}
	\end{equation}
	gives the dual form
	\begin{equation}
		\mathcal L(r;1/\lambda)
		=
		v_g(z(r))-\frac{1}{\lambda(r;y(r),z(r))}v_\ell(y(r)).
		\label{eq:WTA_LagrangianLambda}
	\end{equation}
	This Lagrangian form is used to select the reference point when the first-order condition admits more than one root. The first-order condition identifies candidate roots; the reference point is the root that maximizes \eqref{eq:WTA_LagrangianPenalty}. Thus optimality, not merely stationarity, selects $x_r$:
	\begin{equation}
		x_r\in\argmax_{r\in X}\mathcal L(r;\rho).
		\label{eq:WTA_LagrangianRootSelection}
	\end{equation}
	After the reference point and path-dependent tradeoff are selected, the WTA-WTP indifference equations below are reported in the standard Wakker loss-aversion normalization, where $\lambda>1$ means loss aversion and $\rho=1/\lambda$ is the corresponding Lagrangian penalty coefficient.
	
	\subsection{The Endogenous Loss-Aversion Index}\label{subsec:EndogenLAindex}

	The reference point $x_r$ is selected as the maximizer of the Lagrangian penalty functional in \eqref{eq:WTA_LagrangianPenalty}. For a candidate reference point $r\in X$, let $y(r)$ denote the relevant loss magnitude below $r$ and let $z(r)$ denote the relevant gain magnitude above $r$. Then
	\begin{equation}
		x_r
		\in
		\argmax_{r \in X}
		\left[v_g(z(r))-\rho(r)v_\ell(y(r))\right],
		\qquad
		\rho(r)=\frac{1}{\lambda(r;y(r),z(r))}.
		\label{eq:RefPoint_Maximizer}
	\end{equation}
	The WTA-WTP valuation equations are subsequently written in the standard loss-aversion normalization, where $\lambda=1/\rho$.
	
	At an interior reference point, the local loss-aversion index is measured by the marginal loss-gain ratio:
	
	\begin{equation}
		\lambda(x_r; y, z) = \frac{v_\ell'(y)}{v_g'(z)}.
		\label{eq:lambda_FOC}
	\end{equation}
	If several candidate reference points satisfy \eqref{eq:lambda_FOC}, WRDU does not treat them as equally admissible. The selected reference point is the candidate that maximizes the Lagrangian value in \eqref{eq:WTA_LagrangianRootSelection}. This is the role of the penalty function in the optimality problem: it ranks stationary roots by their penalized gain-loss value.
	
	\subsection{Range Dependence and Boundary Conditions}
	
	The admissible WRDU class satisfies the following range-dependence properties:
	
	\begin{enumerate}
		\item \emph{Monotonicity in the gain outcome:} $\frac{\partial \lambda}{\partial z} > 0$;
		\item \emph{Monotonicity in the loss outcome:} $\frac{\partial \lambda}{\partial y} < 0$;
		\item \emph{Range responsiveness:} $\lambda(x_r;y,z)$ varies with the transaction range rather than remaining fixed.
		\item \emph{Neutral benchmark:} $\lambda(x_r;y,z)=1$, equivalently $\rho(x_r;y,z)=1/\lambda(x_r;y,z)=1$, corresponds to no WTA-WTP distortion from the WRDU transaction normalization.
	\end{enumerate}
	
	The stronger endpoint restrictions used to block the Rabin calibration implication are discussed in the broader WRDU paper. For WTA-WTP, the relevant object is whether $\lambda=1/\rho$ lies above, below, or near its neutral value. Under the standard WTA-WTP loss-aversion convention, $\lambda>1$ means that losses loom larger than gains in the buy-sell indifference equations. In the Lagrangian representation \eqref{eq:WRDU_Representation}, this same case corresponds to $\rho<1$. Conversely, $\rho>1$ corresponds to $\lambda<1$, which indicates gain seeking or loss attenuation.
	
	\section{Abstract Structural Resolution of the WTA-WTP Gap}\label{sec:AbstractResolveWTA_WTP_gap}
	
	This section provides a parameter-free structural explanation of the WTA-WTP gap, relying only on the axioms and the structural form of WRDU.
	
	\subsection{The Indifference Equations}\label{subsec:IndiffEquations_WTA_WTP}
	
	Consider a transaction involving $m$ units of a good. The reference point is the status quo $(0,0)$. In this section, $\rho(m)$ denotes the Lagrangian penalty coefficient selected by the WRDU reference-point problem, and
	\begin{equation}
		\lambda(m)=\frac{1}{\rho(m)}
	\end{equation}
	denotes the reciprocal loss-aversion index used to report the WTA-WTP indifference equations. Thus $\rho$ remains the coefficient in the WRDU Lagrangian representation, while $\lambda$ is a reciprocal reporting convention for the buy-sell comparison. After the reference point and the Lagrangian penalty have been selected, the transaction comparison is reported in the equivalent buy-sell normalization
	\begin{equation}
		U^T(g,\ell;\lambda)=v_g(g)-\lambda v_\ell(\ell),
		\qquad \lambda=\frac{1}{\rho}.
		\label{eq:TransactionReportingNormalization}
	\end{equation}
	Equation \eqref{eq:TransactionReportingNormalization} is not a second representation theorem; it is the WTA-WTP reporting normalization used to express buying and selling indifference in the standard loss-aversion convention.
	
	\paragraph{Willingness to Pay (WTP).} To buy $m$ units for price $p$:
	
	\begin{itemize}
		\item Gain: receives $m$ units $\to$ utility $v_g(m)$;
		\item Loss: pays $p$ dollars $\to$ transaction-normalized disutility $-\lambda v_\ell(p)$, with $\lambda=1/\rho$.
	\end{itemize}
	
	Indifference determines WTP:
	
	\begin{equation}
		v_\ell(\text{WTP}) = \frac{1}{\lambda} \, v_g(m).
		\label{eq:WTP_Equation}
	\end{equation}
	
	\paragraph{Willingness to Accept (WTA).} To sell $m$ units for price $r$:
	
	\begin{itemize}
		\item Gain: receives $r$ dollars $\to$ utility $v_g(r)$;
		\item Loss: gives up $m$ units $\to$ transaction-normalized disutility $-\lambda v_\ell(m)$, with $\lambda=1/\rho$.
	\end{itemize}
	
	Indifference determines WTA:
	
	\begin{equation}
		v_g(\text{WTA}) = \lambda \, v_\ell(m).
		\label{eq:WTA_Equation}
	\end{equation}
	
	The foregoing indifference equations motivate the following impossibility result for a constant reciprocal loss-aversion index.
	
	\begin{prop}[Impossibility under a Fixed Reciprocal Loss-Aversion Index]\label{prop:FixedLambdaImpossibility}
		Suppose $v_g,v_\ell$ are strictly increasing, strictly concave, and twice continuously differentiable with $v_g(0)=v_\ell(0)=0$. Suppose the WTA-WTP indifference equations are reported with a fixed reciprocal loss-aversion index $\bar\lambda>0$, equivalently with a fixed Lagrangian penalty coefficient $\bar\rho=1/\bar\lambda$, so that for every transaction scale $m>0$,
		\begin{equation}
			v_\ell(\mathrm{WTP}(m))=\frac{1}{\bar\lambda}v_g(m)=\bar\rho v_g(m),
			\qquad
			v_g(\mathrm{WTA}(m))=\bar\lambda v_\ell(m)=\frac{1}{\bar\rho}v_\ell(m).
			\label{eq:FixedLambdaIndiff}
		\end{equation}
		
		Assume further that transactions are normalized so that gain and loss subutilities are comparable across scales and that the neutral benchmark ($\bar\lambda=1$) yields $\mathrm{WTA}(m)=\mathrm{WTP}(m)$ for all $m>0$.
		
		If $\bar\lambda\neq1$, then attenuation of the WTA-WTP wedge is not
		implied by the fixed-index indifference equations. In particular, any convergence of
		\begin{equation}
			\mathrm{WTA}(m)-\mathrm{WTP}(m)
		\end{equation}
		to zero must come from additional restrictions on $v_g$ and $v_\ell$, not from
		the fixed reciprocal loss-aversion index itself.
	\end{prop}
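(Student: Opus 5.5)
We solve the two indifference equations in \eqref{eq:FixedLambdaIndiff} explicitly and show that $\bar\lambda$ enters them only as a scale-invariant multiplicative factor. Because $v_g,v_\ell$ are strictly increasing and continuous with $v_g(0)=v_\ell(0)=0$, they are invertible on their ranges. We may write
\begin{equation*}
\mathrm{WTP}(m)=v_\ell^{-1}\bigl(\bar\rho\, v_g(m)\bigr),\qquad
\mathrm{WTA}(m)=v_g^{-1}\bigl(\bar\lambda\, v_\ell(m)\bigr).
\end{equation*}
If the arguments leave the ranges, the corresponding price is simply undefined or infinite, which only strengthens the claim. The comparability normalization together with $\mathrm{WTA}=\mathrm{WTP}$ at $\bar\lambda=1$ means $v_g^{-1}(v_\ell(m))=v_\ell^{-1}(v_g(m))$ for all $m$. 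The natural way to enforce this is to take a common subutility $v:=v_g=v_\ell$ on the normalized class, and I will do so. The wedge then reads
\begin{equation*}
D(m;\bar\lambda)=v^{-1}\bigl(\bar\lambda v(m)\bigr)-v^{-1}\bigl(v(m)/\bar\lambda\bigr).
\end{equation*}
For $\bar\lambda>1$ this is strictly positive for each $m>0$, by strict monotonicity of $v^{-1}$ and $\bar\lambda>1/\bar\lambda$. For $\bar\lambda<1$ it is strictly negative.

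The second step shows that nothing in the fixed-index equations forces $D(m;\bar\lambda)\to0$ along $m\to\infty$ or $m\to0$. The key observation is that, measured in utility units, the gap is fixed:
\begin{equation*}
v(\mathrm{WTA}(m))-v(\mathrm{WTP}(m))=(\bar\lambda-1/\bar\lambda)\,v(m),
\end{equation*}
and the ratio $v(\mathrm{WTA}(m))/v(\mathrm{WTP}(m))=\bar\lambda^{2}$ is independent of $m$. So the index supplies a constant proportional distortion at every scale, with no $m$-dependent factor that vanishes. Whether the dollar wedge vanishes is therefore determined entirely by the curvature of $v^{-1}$ along the sequence $v(m)$, that is, by $v$ itself.

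To make ``not implied'' precise, I would exhibit two admissible specifications with the same $\bar\lambda\neq1$ and opposite behaviour. The first is a strictly concave $v$ that grows without bound, e.g. $v(x)=\log(1+x)$. Then
\begin{equation*}
\mathrm{WTA}(m)=(1+m)^{\bar\lambda}-1,
\end{equation*}
and $D\to\infty$, so there is no attenuation. The second is a bounded $v$, e.g. $v(x)=1-e^{-x}$ with $\bar\lambda<1$. Here
\begin{equation*}
\mathrm{WTP}(m)=-\log\bigl(1-(1-e^{-m})/\bar\lambda\bigr)
\end{equation*}
behaves differently, showing that any limit behaviour depends on $v$. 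As $m\to0$, both prices are of order $m$, and the wedge vanishes only trivially through $v$'s linearization. Even then, $\mathrm{WTA}/\mathrm{WTP}\to\bar\lambda^{2}\neq1$ when $v'(0)>0$. Since $D(m;\bar\lambda)\neq0$ for every $m$ and its asymptotics vary across admissible $v$, convergence of the wedge to zero cannot be a consequence of the fixed index alone. It must come from extra restrictions on $v_g,v_\ell$.

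The main obstacle is interpretive rather than computational. The statement's ``attenuation is not implied'' is a non-implication, so the proof must produce a counterexample within the normalized class rather than a universal bound. I also need to make sure that the comparability normalization really reduces to $v_g=v_\ell$, or at least to $v_g^{-1}\circ v_\ell=v_\ell^{-1}\circ v_g$, so that the counterexamples are admissible. I would first try to state the normalization as $v_g=v_\ell$ and handle the range/invertibility issue by restricting to $m$ where both inverses are defined.
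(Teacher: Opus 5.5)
Your proposal is correct, and it proves more than the paper's own argument does.

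\textbf{Comparison with the paper.} The paper's proof inverts the two equations and notes that for $\bar\lambda\neq1$ they carry the $m$-independent factors $\bar\lambda$ and $1/\bar\lambda$. It then concludes that a fixed index supplies no mechanism for attenuation. It never uses the neutral-benchmark hypothesis and never exhibits an admissible specification in which the wedge fails to vanish, so the non-implication is argued only at the level of mechanism.

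Your invariant $v(\mathrm{WTA}(m))/v(\mathrm{WTP}(m))=\bar\lambda^{2}$ is the paper's ``fixed multiplicative distortion'' stated precisely. Your example $v(x)=\log(1+x)$ gives $\mathrm{WTA}(m)-\mathrm{WTP}(m)=(1+m)^{\bar\lambda}-(1+m)^{1/\bar\lambda}$, and this is what turns the intuition into an actual proof of non-implication. You also correctly separate two things the paper's wording leaves implicit:
\begin{itemize}
\item the trivial absolute convergence as $m\to0$, which holds for every admissible $v$ because $v^{-1}$ is continuous at $0$;
\item the relative distortion, which persists.
\end{itemize}

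\textbf{First adjustment: the obstacle you flag closes in one line.} The neutral benchmark forces $v_g=v_\ell$. Let $a=\mathrm{WTA}(m)=\mathrm{WTP}(m)$ at $\bar\lambda=1$. Then $v_g(a)=v_\ell(m)$ and $v_\ell(a)=v_g(m)$. If $a>m$, then $v_g(a)>v_g(m)=v_\ell(a)>v_\ell(m)=v_g(a)$, a contradiction; the case $a<m$ is symmetric. Hence $a=m$ and $v_g(m)=v_\ell(m)$. You need this for general claims such as $D(m;\bar\lambda)\neq0$ for all $m$, though not for the counterexample itself.

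\textbf{Second adjustment: drop the promise of two specifications with opposite behaviour.} Your exponential example does not attenuate. For $\bar\lambda<1$, $\mathrm{WTP}(m)$ exists only for $m<-\log(1-\bar\lambda)$ and explodes at that bound.

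In fact no admissible $v$ attenuates as $m\to\infty$:
\begin{itemize}
\item If $v$ is unbounded, the mean value theorem and convexity of $v^{-1}$ give $|D(m;\bar\lambda)|\ge|\bar\lambda-1/\bar\lambda|\,v(m)\,(v^{-1})'(\min\{\bar\lambda,1/\bar\lambda\}\,v(m))$. Since $(v^{-1})'$ is positive and nondecreasing, this lower bound tends to $\infty$.
\item If $v$ is bounded, one of the two prices ceases to exist for large $m$.
\end{itemize}
So ``its asymptotics vary across admissible $v$'' overstates matters. One counterexample already proves the stated non-implication, and this short argument yields the stronger impossibility that the proposition's title suggests.

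\textbf{Minor point.} The limit $\mathrm{WTA}/\mathrm{WTP}\to\bar\lambda^{2}$ as $m\to0$ needs $v'(0)<\infty$, not $v'(0)>0$. For $v(x)=x^{\alpha}$ the ratio is $\bar\lambda^{2/\alpha}$, which is still different from one.
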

	
	The proof is given in \Cref{app:Proof_FixedLambdaImpossibility}.

	\subsection{The Structural Origin of the Gap}\label{subsec:Structural_origin_of_gap}
	
	Compare \eqref{eq:WTP_Equation} and \eqref{eq:WTA_Equation}. Since $\rho=1/\lambda$, the WTA-WTP-normalized equations can also be written in terms of the Lagrangian penalty coefficient:
	
	\begin{itemize}
		\item WTP solves $v_\ell(\text{WTP})=v_g(m)/\lambda=\rho v_g(m)$;
		\item WTA solves $v_g(\text{WTA})=\lambda v_\ell(m)=v_\ell(m)/\rho$.
	\end{itemize}
	
	If $\lambda>1$, equivalently $\rho<1$, and if the gain and loss subutilities are normalized on a common transaction scale, then:
	
	\begin{itemize}
		\item From \eqref{eq:WTP_Equation}: $v_\ell(\text{WTP}) < v_g(m)$; since $v_\ell$ is increasing, WTP is relatively small;
		\item From \eqref{eq:WTA_Equation}: $v_g(\text{WTA}) > v_\ell(m)$; since $v_g$ is increasing, WTA is relatively large.
	\end{itemize}
	
	Thus, on the normalized admissible region where the two inverse-utility comparisons are ordered, $\lambda>1$ implies:
	
	\begin{equation}
		\text{WTA} > \text{WTP}.
		\label{eq:Gap}
	\end{equation}
	
	This is the endowment effect. No parametric form is needed, but the conclusion is conditional on the standard normalization that makes the gain and loss subutilities comparable across the good and money dimensions.
	
	\begin{rem}[Interpretation of $\lambda > 1$]
		When $\lambda > 1$, losses are magnified relative to gains in the WTA-WTP indifference equations, which is the standard behavioral interpretation of loss aversion. In the Lagrangian representation \eqref{eq:WRDU_Representation}, the corresponding penalty coefficient is $\rho=1/\lambda<1$. Conversely, $\lambda<1$ corresponds to gain seeking or loss attenuation, not loss aversion. In that case $\rho=1/\lambda>1$, so the Lagrangian penalty is high on gain-seeking or excessive-risk-seeking paths.
	\end{rem}
	
	\subsection{Range Dependence: Why the Gap Attenuates}\label{subsec:Why_gap_Attenuates}
	
	The crucial innovation is that the Lagrangian penalty coefficient $\rho$, and hence the reciprocal index $\lambda=1/\rho$, is range-dependent. For the WTA-WTP application, attenuation of the endowment-effect wedge means
	\begin{equation}
		\rho(m) \uparrow 1
		\qquad\text{or, equivalently,}\qquad
		\lambda(m) \downarrow 1
	\end{equation}
	along the relevant transaction path. Substituting $\rho(m)\to1$, equivalently $\lambda(m)\to1$, into \eqref{eq:WTP_Equation} and \eqref{eq:WTA_Equation} yields the neutral benchmark:
	\begin{equation}
		v_\ell(\text{WTP})\to v_g(m),\qquad
		v_g(\text{WTA})\to v_\ell(m).
	\end{equation}
	Under the common normalization $v_g=v_\ell$ on the transaction scale, both converge to the same value of the object.
	
	Formally, for a large transaction scale $M$ satisfying the assumptions of \Cref{lem:LargeTransaction}, we have:

	\begin{equation}
		\lim_{M\to \infty} \bigl(\text{WTA}(M)-\text{WTP}(M)\bigr) = 0.
		\label{eq:Attenuation}
	\end{equation}
	
	The gap attenuates. Small-stakes endowment effects need not imply equally large WTA-WTP gaps at larger or more familiar transaction scales.
	
	\begin{rem}[The Rabin Calibration Paradox]
		The WTA-WTP attenuation result is related to, but distinct from, the Rabin calibration implication \citep{Rabin2000}. In the Rabin application, the large-stakes result concerns favorable risky gambles. In the WTA-WTP application, the relevant neutral benchmark is instead $\lambda=\rho=1$. The common point is that WRDU does not impose a fixed Lagrangian penalty coefficient across scales.
	\end{rem}
	
	\subsection{Summary of the Structural Mechanism}
	
	\begin{table}[htbp]
		\centering
		\caption{Structural Mechanism: WTA-WTP Gap and Its Attenuation}
		\label{tab:Mechanism}
		\small
		\begin{tabular}{lcccc}
			\toprule
			Transaction Scale & Range & $\lambda = v_\ell'/v_g'$ & $\rho=1/\lambda$ & Structural Implication \\
			\midrule
			Small & Narrow & $\lambda > 1$ & $\rho<1$ & WTA $>$ WTP \\
			Attenuated & Wider/familiar & $\lambda \to 1$ & $\rho\to1$ & WTA $\approx$ WTP \\
			\bottomrule
		\end{tabular}
	\end{table}
	
	\section{Illustrative Example}\label{sec:Illustrative_example}
	
	This section provides a concrete example to illustrate the structural mechanism. The example is purely illustrative and does not represent a calibration.
	
	\subsection{Specification}
	
	Let the gain subutility be:
	
	\begin{equation}
	v_g(x) = \sqrt{x}, \qquad x \geq 0.
	\end{equation}
	
	Let the loss subutility (defined on loss magnitudes) be:
	
	\begin{equation}
	v_\ell(x) = \sqrt{x}, \qquad x \geq 0.
	\end{equation}
	
	Both are strictly increasing, strictly concave, and satisfy $v_g(0) = v_\ell(0) = 0$.
	
	The loss-aversion index is:
	
	\begin{equation}
	\lambda(x_r; y, z) = \frac{v_\ell'(y)}{v_g'(z)} = \frac{1/(2\sqrt{y})}{1/(2\sqrt{z})} = \frac{\sqrt{z}}{\sqrt{y}}.
	\end{equation}
	
	This $\lambda$ is range-dependent: it increases with the gain magnitude $z$ and decreases with the loss magnitude $y$.
	
	\subsection{Small Stake}
	
	Consider buying or selling one unit ($m = 1$). Let the relevant range be small, with $y = z = 1$. Then:
	
	\begin{equation}
	\lambda = \frac{\sqrt{1}}{\sqrt{1}} = 1.
	\end{equation}
	
	\paragraph{WTP:}
	\begin{align}
		v_\ell(\text{WTP}) &= \frac{1}{\lambda} \, v_g(1) = 1 \cdot 1 = 1\\
		\sqrt{\text{WTP}} &= 1 \quad \Longrightarrow \quad \text{WTP} = 1.
	\end{align}
	
	\paragraph{WTA:}
	\begin{align}
		v_g(\text{WTA}) &= \lambda v_\ell(1) = 1 \cdot 1 = 1\\
		\sqrt{\text{WTA}} &= 1 \quad \Longrightarrow \quad \text{WTA} = 1.
	\end{align}
	
	Thus WTA $=$ WTP. This is the case where the gap is zero because $\lambda = 1$.
	
	To illustrate an endowment-effect wedge, choose asymmetric subutilities: $v_g(x) = \sqrt{x}$ as before, but $v_\ell(x) = x^{0.8}$. This makes the loss-side marginal utility higher than the gain-side marginal utility at the normalized benchmark. Then:
	
	\begin{equation}
		\lambda(1;1,1) = \frac{v_\ell'(1)}{v_g'(1)} = \frac{0.8}{0.5} = 1.6 > 1.
	\end{equation}
	
	\paragraph{WTP:}
	\begin{align}
	v_\ell(\text{WTP}) &= \frac{1}{\lambda} \, v_g(1) = \frac{1}{1.6} \cdot 1 = 0.625\\
	(\text{WTP})^{0.8} &= 0.625 \quad \Longrightarrow \quad \text{WTP} = (0.625)^{1.25} \approx 0.555.
	\end{align}
	
	\paragraph{WTA:}
	\begin{align}
		v_g(\text{WTA}) &= \lambda v_\ell(1) = 1.6 \cdot 1 = 1.6\\
		\sqrt{\text{WTA}} &= 1.6 \quad \Longrightarrow \quad \text{WTA} = 2.56.
	\end{align}
	
	Thus $\mathrm{WTA}=2.56 > 0.555\approx\mathrm{WTP}$. The illustration generates a WTA-WTP wedge.
	
	\subsection{Large Stake}
	
	The preceding small-stakes calculation used fixed power subutilities and therefore does not describe attenuation. To illustrate attenuation in the WTA-WTP application, keep the same normalized transaction utility for the object and let both the loss-aversion index and the Lagrangian penalty coefficient move toward neutrality:
	
	\begin{equation}
		\lambda(m)=1+\frac{0.2}{(1+m)^2},\qquad
		\rho(m)=\frac{1}{\lambda(m)}.
	\end{equation}
	
	Then $\lambda(1)=1.05$ and $\rho(1)\approx0.9524$, so the small transaction displays a WTA-WTP wedge. For a larger transaction, say $m=100$,
	
	\begin{equation}
		\lambda(100)=1+\frac{0.2}{101^2}\approx1.00002,\qquad
		\rho(100)\approx0.99998.
	\end{equation}
	
	Both indices are now close to one, so the WTA-WTP wedge is correspondingly small. This example is not a CRRA calibration; it only illustrates the transaction-path condition $\lambda(m)\to1$ with $\rho(m)=1/\lambda(m)\to1$.
	
	\subsection{Interpretation of the Example}
	
	The example illustrates the structural mechanism:
	
	\begin{enumerate}
		\item For small stakes, the admissible functional form yields $\lambda > 1$, so WTA $>$ WTP (endowment effect).
		\item For larger or more familiar transactions, $\lambda(m)$ moves toward one and $\rho(m)=1/\lambda(m)$ also moves toward one, so the WTA-WTP gap attenuates.
	\end{enumerate}
	
	The example is purely illustrative; the structural result does not depend on any specific parametric form.
	
	The appendix contains four plots that visualize the same example. \Cref{fig:WTA_WTP_lambda_surface} plots the range-dependent index $\lambda=v_\ell'/v_g'$ for the neutral and endowment-effect cases. \Cref{fig:WTA_WTP_rho_surface} plots the Lagrangian penalty coefficient $\rho=1/\lambda$, showing where gain-seeking or loss attenuation is penalized. \Cref{fig:WTA_WTP_attenuation_path} tracks the transaction path for $\lambda(m)$, $\rho(m)$, WTP, WTA, and the WTA-WTP gap. \Cref{fig:WTA_WTP_summary} summarizes the mechanism by showing how both indices return toward neutrality as the wedge attenuates.
	
	\section{Conclusion}\label{sec:Conclusion}
	
	This paper has axiomatized a reference-dependent WRDU representation for studying the WTA-WTP gap under objective probabilities. The representation uses the Lagrangian penalty coefficient $\rho$ on the loss-side component, and the reciprocal index $\lambda=1/\rho$ reports the standard WTA-WTP loss-aversion comparison. The gap arises from the structural asymmetry of the indifference conditions for buying and selling. When $\lambda>1$, the structural equations generate WTA $>$ WTP on the normalized admissible transaction class. When $\lambda<1$, $\rho>1$ is high and identifies gain-seeking or excessive-risk-seeking paths. Because $\rho$ and its reciprocal index $\lambda$ are range-dependent rather than fixed, the WTA-WTP gap attenuates along transaction paths for which both return toward their neutral value one.
	
	The structural explanation does not depend on CRRA utility or probability weighting. It relies on the WRDU gain-loss representation, range dependence of the Lagrangian coefficient, and a normalization that makes WTA and WTP comparable. The result is complementary to other explanations such as Cautious Utility \citep{CerreiaVioglioDillenbergerOrtoleva2024}, which attributes the gap to uncertainty about trade-offs and caution. The range-dependent attenuation prediction also gives the model a direct empirical implication for transaction environments in which reference points become familiar, market-like, or repeated.
	
	The contribution is therefore deliberately restricted: it is a representation and comparative-statics result for a six-axiom WRDU transaction class, not a claim that all endowment effects or all reference-dependent behavior have the same source.
	
	\appendix
	
	\section{Appendix: Proofs of Main Results}\label{app:MainProofs}
	
	\subsection{Proof of the WRDU Representation Theorem}\label{app:Proof_WRDURepresentation}
	
	\begin{proof}[Proof of \Cref{thm:WRDURepresentation}]
		
		We proceed in five steps.
		
		\medskip
		\noindent
		\textbf{Step 1: Region-wise expected-utility representation.}
		
		By Completeness, Transitivity, and Continuity, $\succeq$ is a continuous weak order on the set of finite lotteries over the compact outcome set $X$.
		
		By Weak Independence, restricted to mixtures that preserve the reference partition and do not move outcomes across gain and loss regions, preferences satisfy the von Neumann-Morgenstern independence condition on each region separately.
		
		Fix a reference point $x_r$. On lotteries whose support lies entirely in the gain region $C_g(X)\cup\{x_r\}$, Weak Independence implies the existence of a continuous affine representation
		\begin{equation}
			V_g(L)=\sum_{x\in C_g(X)} p_x u_g(x),
		\end{equation}
		for some strictly increasing function $u_g$.
		
		Similarly, on lotteries supported in the loss region $C_\ell(X)\cup\{x_r\}$, there exists
		\begin{equation}
			V_\ell(L)=\sum_{x\in C_\ell(X)} p_x u_\ell(x),
		\end{equation}
		for some strictly increasing function $u_\ell$.
		
		Strict monotonicity follows from monotonicity of preferences over degenerate lotteries.
		
		\medskip
		\noindent
		\textbf{Step 2: Reference anchoring and gain-loss normalization.}
		
		By the Reference Partition axiom, $x_r$ serves as the anchor dividing gains and losses. Because outcomes equal to the reference point produce neither gain nor loss, we normalize
		\begin{equation}
			u_g(x_r)=0,
			\qquad
			u_\ell(x_r)=0.
		\end{equation}
		
		Define gain and loss subutilities on magnitudes:
		\begin{align}
			v_g(z)&=u_g(x_r+z)-u_g(x_r), \quad z\ge0,\\
			v_\ell(y)&=u_\ell(x_r)-u_\ell(x_r-y), \quad y\ge0.
		\end{align}
		
		Then $v_g(0)=v_\ell(0)=0$. Strict concavity is imposed as part of the admissible WRDU transaction class; it is not implied by continuity alone.
		
		\medskip
		\noindent
		\textbf{Step 3: Penalized aggregation across regions.}
		
		Consider a lottery $L$ with support spanning both gain and loss regions.
		
		The gain-loss lottery-pair part of Weak Independence identifies how the separately represented gain and loss components are recombined once the reference partition is fixed. It requires the loss-side component to enter the aggregate comparison through a positive range-dependent coefficient.
		
		Range Dependence further requires that this coefficient be a function of the relevant gain and loss magnitudes.
		
		Thus preferences over such lotteries admit a representation of the form
		\begin{equation}
			V(L)=
			\sum_{x\in C_g(X)} p_x v_g(x-x_r)
			-\rho(x_r;y,z)
			\sum_{x\in C_\ell(X)} p_x v_\ell(x_r-x),
		\end{equation}
		for some positive range-dependent Lagrangian penalty coefficient $\rho(x_r;y,z)$. Define the reciprocal loss-aversion index by
		\begin{equation}
			\lambda(x_r;y,z)=\frac{1}{\rho(x_r;y,z)}.
		\end{equation}
		The coefficient $\rho$ is the Lagrangian shadow-price penalty on the loss-side component, while $\lambda=1/\rho$ is the reciprocal index used below to report WTA-WTP comparative statics. This yields the representation
		\begin{equation}
			V^\rho(L)=
			\sum_{x\in C_g(X)} p_x v_g(x-x_r)
			-\rho(x_r;y,z)
			\sum_{x\in C_\ell(X)} p_x v_\ell(x_r-x).
		\end{equation}
		
		\medskip
		\noindent
		\textbf{Step 4: Characterization of the endogenous loss-aversion index.}
		
		By the Reference Partition axiom, the reference point $x_r$ is selected as a maximizer of the Lagrangian penalty problem
		\begin{equation}
			r \mapsto
			\mathcal L(r;\rho)
			=
			v_g(z(r))-\rho(r)v_\ell(y(r)),
			\qquad
			\rho(r)=\frac{1}{\lambda(r;y(r),z(r))}.
		\end{equation}
		
		At an interior stationary root, the local loss-aversion index is measured by the marginal loss-gain ratio:
		\begin{equation}
			\lambda(x_r;y,z)=\frac{v_\ell'(y)}{v_g'(z)}.
		\end{equation}
		
		Thus the loss-aversion index is determined by the ratio of marginal utilities across domains. Values above one correspond to loss aversion in the standard sense; values below one correspond to gain seeking or loss attenuation. If the first-order condition has several roots, the reference-point axiom selects the root that maximizes $\mathcal L(r;\rho)$, so the model uses optimality rather than stationarity alone.
		
		\medskip
		\noindent
		\textbf{Step 5: Semi-affine uniqueness.}
		
		Suppose $(\tilde v_g,\tilde v_\ell,\tilde\rho,\tilde\lambda)$ is another representation satisfying
		\begin{equation}
			\tilde V^\rho(L)=
			\sum_{x\in C_g(X)} p_x \tilde v_g(x-x_r)
			-\tilde\rho(x_r;y,z)
			\sum_{x\in C_\ell(X)} p_x \tilde v_\ell(x_r-x).
		\end{equation}
		
		Because both representations agree on lotteries supported entirely in $C_g(X)$, the standard von Neumann-Morgenstern uniqueness theorem implies
		\begin{equation}
			\tilde v_g = a v_g + b
		\end{equation}
		for some constants $a>0$, $b\in\mathbb R$.
		
		Anchoring at the reference point imposes $\tilde v_g(0)=v_g(0)=0$, hence $b=0$ and
		\begin{equation}
			\tilde v_g = a v_g.
		\end{equation}
		
		An identical argument applies in the loss region, yielding
		\begin{equation}
			\tilde v_\ell = c v_\ell.
		\end{equation}
		for some $c>0$. The common-scale normalization used for WTA-WTP comparisons sets $c=a$.
		
		Substituting into the full representation then shows that the penalty index must satisfy
		\begin{equation}
			\tilde\rho(x_r;y,z)=\rho(x_r;y,z),
			\qquad
			\tilde\lambda(x_r;y,z)=\lambda(x_r;y,z).
		\end{equation}
		
		Therefore the normalized representation is unique up to the common positive scaling factor $a>0$.
		Independent additive transformations of $v_g$ and $v_\ell$ are not admissible because they would violate the anchoring condition at $x_r$ and alter the gain-loss partition.
		
		\medskip
		
		This establishes existence and semi-affine uniqueness of the WRDU representation.
	\end{proof}
	
	\subsection{Proof of the Fixed-Index Proposition}\label{app:Proof_FixedLambdaImpossibility}
	
	\begin{proof}[Proof of \Cref{prop:FixedLambdaImpossibility}]
		For each $m>0$, the indifference equations are
		\begin{equation}
			v_\ell(\mathrm{WTP}(m))=\frac{1}{\bar\lambda}\, v_g(m)=\bar\rho v_g(m),
			\qquad
			v_g(\mathrm{WTA}(m))=\bar\lambda v_\ell(m)=\frac{1}{\bar\rho}v_\ell(m),
			\qquad \bar\rho=\frac{1}{\bar\lambda}.
		\end{equation}
		Strict monotonicity implies
		\begin{equation}
			\mathrm{WTP}(m)=v_\ell^{-1}\!\left(\frac{1}{\bar\lambda}\, v_g(m)\right),
			\qquad
			\mathrm{WTA}(m)=v_g^{-1}\!\left(\bar\lambda v_\ell(m)\right).
		\end{equation}
		If $\bar\lambda\neq1$, equivalently $\bar\rho\neq1$, the two equations differ by a fixed multiplicative distortion. Since that distortion is independent of $m$, the fixed reciprocal index, or equivalently the fixed Lagrangian coefficient $\bar\rho$, provides no mechanism by which the wedge moves toward the neutral benchmark as transaction scale changes. Thus a fixed reciprocal loss-aversion index cannot generate structural attenuation by itself.
	\end{proof}
	
	\section{Appendix: Proof of the Structural Mechanism}
	
	This appendix provides a formal statement of the WTA-WTP mechanism. The result is stated conditionally because WTA and WTP compare different dimensions and therefore require a common normalization.
	
	\subsection{Setup}
	
	Let $v_g$ and $v_\ell$ be strictly increasing, strictly concave $C^2$ functions with $v_g(0)=v_\ell(0)=0$. Let
	\begin{equation}
		\lambda(x_r;y,z)=\frac{v_\ell'(y)}{v_g'(z)}
	\end{equation}
	denote the WRDU virtual loss-aversion index, and let $\rho(x_r;y,z)=1/\lambda(x_r;y,z)$ denote the corresponding Lagrangian penalty coefficient.
	
	\subsection{Lemma: Endowment Effect for Small Transactions}
	
	\begin{lem}\label{lem:SmallTransaction}
		Suppose the normalized transaction path satisfies $\lambda(m)>1$ for sufficiently small $m>0$ and the inverse-utility comparisons induced by \eqref{eq:WTP_Equation} and \eqref{eq:WTA_Equation} are ordered on the common transaction scale. Then
		\begin{equation}
			\text{WTA}(m)>\text{WTP}(m).
		\end{equation}
	\end{lem}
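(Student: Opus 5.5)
The argument uses the two indifference equations \eqref{eq:WTP_Equation} and \eqref{eq:WTA_Equation}, evaluated at the path value $\lambda=\lambda(m)$, and places both prices on either side of a common neutral value. First, fix $m>0$ small enough that $\lambda(m)>1$, equivalently $\rho(m)=1/\lambda(m)<1$. Because $v_g$ and $v_\ell$ are strictly increasing and continuous with $v_g(0)=v_\ell(0)=0$, the two equations have unique solutions
\begin{equation*}
\mathrm{WTP}(m)=v_\ell^{-1}\!\bigl(\rho(m)\,v_g(m)\bigr),\qquad
\mathrm{WTA}(m)=v_g^{-1}\!\bigl(\lambda(m)\,v_\ell(m)\bigr),
\end{equation*}
whenever the arguments lie in the respective ranges. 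This is the same inversion step used in the proof of \Cref{prop:FixedLambdaImpossibility}.

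Second, introduce the neutral benchmark value $\mathrm{WTP}^0(m)=v_\ell^{-1}(v_g(m))$ and $\mathrm{WTA}^0(m)=v_g^{-1}(v_\ell(m))$. By the normalization hypothesis, as in \Cref{prop:FixedLambdaImpossibility}, $\lambda=1$ yields WTA $=$ WTP. So there is a common value $c(m)=\mathrm{WTP}^0(m)=\mathrm{WTA}^0(m)$.

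Third, compare each price with $c(m)$. Since $v_g(m)>0$ and $\rho(m)<1$, we have $\rho(m)v_g(m)<v_g(m)$. Strict monotonicity of $v_\ell^{-1}$ then gives $\mathrm{WTP}(m)<c(m)$. Likewise $\lambda(m)v_\ell(m)>v_\ell(m)$, and strict monotonicity of $v_g^{-1}$ gives $\mathrm{WTA}(m)>c(m)$. Chaining the two inequalities yields $\mathrm{WTA}(m)>c(m)>\mathrm{WTP}(m)$, which is the claim.

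The main obstacle is the second step: showing that the two neutral values genuinely coincide. Without normalization, $v_\ell^{-1}\circ v_g$ and $v_g^{-1}\circ v_\ell$ are inverse maps and need not agree at $m$. The hypothesis that the inverse-utility comparisons are \emph{ordered on the common transaction scale} must be read as supplying exactly this. I would state it explicitly as $v_g=v_\ell$ on the relevant range, the common-scale normalization of Step 5 in the proof of \Cref{thm:WRDURepresentation}, or more weakly as $v_g^{-1}(v_\ell(m))\ge v_\ell^{-1}(v_g(m))$. Either version suffices for the chain above.

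A secondary technical point is to ensure that $\lambda(m)v_\ell(m)$ lies in the range of $v_g$. Compactness of $X$ could otherwise bound that range. I would handle it either by assuming the ranges are unbounded on $\mathbb R_+$, or by noting that for small $m$ both sides are near $0$ and continuity keeps $\lambda(m)v_\ell(m)$ inside the range.
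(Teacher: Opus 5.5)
Your proof is correct and is the paper's argument made explicit: the paper also observes that $\lambda(m)>1$ pushes WTP below and WTA above the neutral benchmark, then appeals to the ``maintained ordering'' of the inverse-utility comparisons, which you rightly pin down as $v_g^{-1}(v_\ell(m))\ge v_\ell^{-1}(v_g(m))$; this is a genuine assumption, since e.g.\ $v_g(x)=\sqrt{x}$, $v_\ell(x)=x^{0.8}$ reverses it for $m<1$. Two minor precisions: the $v_g=v_\ell$ normalization you invoke is the one stated in \cref{subsec:Why_gap_Attenuates}, not Step 5 of the representation proof (which only sets $c=a$ across two representations), and your continuity fallback for $\lambda(m)v_\ell(m)$ lying in the range of $v_g$ also needs $\lambda(m)$ to stay bounded as $m\to0$.
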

	
	\begin{proof}
		Since $\lambda(m)>1$, Equation \eqref{eq:WTP_Equation} lowers the WTP side relative to the neutral benchmark, while \eqref{eq:WTA_Equation} raises the WTA side. The maintained ordering of the inverse-utility comparisons gives WTA$(m)>$WTP$(m)$.
	\end{proof}
	
	\subsection{Lemma: Attenuation Along Neutralizing Transaction Paths}
	
	\begin{lem}\label{lem:LargeTransaction}
		Suppose the normalized transaction path satisfies $\lambda(M)\to1$ as $M$ grows and that the neutral benchmark has WTA$(M)=$WTP$(M)$. Then
		\begin{equation}
			\lim_{M\to\infty}\bigl(\text{WTA}(M)-\text{WTP}(M)\bigr)=0.
		\end{equation}
	\end{lem}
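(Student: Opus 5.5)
The plan is to read \Cref{lem:LargeTransaction} exactly as the paper uses it in \cref{subsec:Why_gap_Attenuates}. That is, substitute the neutralizing path $\lambda(M)\to1$ into \eqref{eq:WTP_Equation} and \eqref{eq:WTA_Equation}, and pass to the limit through the inverse subutilities. By strict monotonicity and continuity of $v_g,v_\ell$, the two indifference equations are uniquely solved by
\begin{equation*}
\mathrm{WTP}(M)=v_\ell^{-1}\!\left(\rho(M)\,v_g(M)\right),\qquad
\mathrm{WTA}(M)=v_g^{-1}\!\left(\lambda(M)\,v_\ell(M)\right),\qquad \rho(M)=\frac{1}{\lambda(M)}.
\end{equation*}
This is the same inversion step used in the proof of \Cref{prop:FixedLambdaImpossibility}. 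Next I define the neutral-benchmark values $\mathrm{WTP}^0(M)=v_\ell^{-1}(v_g(M))$ and $\mathrm{WTA}^0(M)=v_g^{-1}(v_\ell(M))$, obtained by setting $\lambda=\rho=1$. The hypothesis of the lemma says precisely that $\mathrm{WTA}^0(M)=\mathrm{WTP}^0(M)$. Under the paper's common normalization $v_g=v_\ell$ on the transaction scale, both equal the object value $M$.

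The key step is the decomposition
\begin{equation*}
\mathrm{WTA}(M)-\mathrm{WTP}(M)=\bigl[\mathrm{WTA}(M)-\mathrm{WTA}^0(M)\bigr]-\bigl[\mathrm{WTP}(M)-\mathrm{WTP}^0(M)\bigr].
\end{equation*}
The middle benchmark terms cancel by the neutral-benchmark hypothesis, so it suffices to show that each bracket tends to zero. Each bracket is the change in a continuous inverse function, $v_g^{-1}$ or $v_\ell^{-1}$, when its argument is multiplied by $\lambda(M)$ or $\rho(M)$ instead of by $1$. Since $\lambda(M)\to1$, we have $\rho(M)=1/\lambda(M)\to1$ as well, so both multiplicative distortions vanish.

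The main obstacle is uniformity. The arguments $\lambda(M)v_\ell(M)$ move with $M$, so pointwise continuity of $v_g^{-1}$ at a fixed point is not enough by itself. I will handle this through the normalization maintained in the statement, namely a \emph{normalized} transaction path. Following \cref{subsec:Structural_origin_of_gap}, WTA and WTP are compared on a common transaction scale in which the object's subutility is fixed at the reference value, $v_g(M)=v_\ell(M)=\bar v$. This is the same scale-comparability assumption imposed in \Cref{prop:FixedLambdaImpossibility}.

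On that scale the two prices become $v_g^{-1}(\lambda(M)\bar v)$ and $v_\ell^{-1}(\bar v/\lambda(M))$. These are compositions of a fixed continuous, strictly increasing inverse with a sequence converging to $\bar v$. Hence both converge to $v^{-1}(\bar v)$, which is the common neutral value. The bracketed differences therefore tend to zero, and so does the absolute gap, which gives \eqref{eq:Attenuation}.

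I will note explicitly that this is where the result uses the normalization rather than the fixed-index equations. That is consistent with \Cref{prop:FixedLambdaImpossibility}: there, with $\bar\lambda\ne1$ fixed, the same decomposition leaves a non-vanishing distortion. Here the only new ingredient is that the distortion parameter itself converges to the neutral value one.
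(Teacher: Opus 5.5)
Your skeleton (invert \eqref{eq:WTP_Equation} and \eqref{eq:WTA_Equation}, compare with the neutral benchmark, let $\lambda(M)\to1$) is the paper's proof. The paper stops at the observation that WTA and WTP are continuous in $\lambda$. You are right that pointwise continuity does not control a limit in which $v_g(M)$ and $v_\ell(M)$ move, but your repair fails. The neutral-benchmark hypothesis already forces $v_g(M)=v_\ell(M)$: if $w=v_g^{-1}(v_\ell(M))=v_\ell^{-1}(v_g(M))\neq M$, monotonicity of both functions gives a contradiction. So the common neutral price is $M$, which diverges. Fixing $v_g(M)=v_\ell(M)=\bar v$ for all $M$ is impossible for strictly increasing subutilities. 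If you mean a rescaling $v^{(M)}=\bar v\,v/v(M)$ for each $M$, that is admissible and leaves WTA and WTP unchanged. But then the inverse $(v^{(M)})^{-1}(w)=v^{-1}\bigl(w\,v(M)/\bar v\bigr)$ is no longer fixed. Its slope at $\bar v$ is $v(M)/(\bar v\,v'(M))\ge M/\bar v$, by concavity and $v(0)=0$. Hence convergence of the argument to $\bar v$ controls nothing in price units. Your claimed limit $v^{-1}(\bar v)$ also contradicts your own identification of the neutral value with $M$. What the normalization actually delivers is the utility-scale statement $[v(\mathrm{WTA}(M))-v(\mathrm{WTP}(M))]/v(M)=\lambda(M)-\lambda(M)^{-1}\to0$, which is not the lemma.

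No argument can close this step under the stated hypotheses. Take the paper's own neutral specification $v_g=v_\ell=\sqrt{\cdot}$; it fails to be $C^2$ only at $0$, which plays no role for large $M$. Take also $\lambda(M)=1+M^{-1/2}$. Then $\mathrm{WTA}(M)=\lambda(M)^2M$ and $\mathrm{WTP}(M)=M/\lambda(M)^2$. The neutral benchmark gives $\mathrm{WTA}=\mathrm{WTP}=M$, and $\lambda(M)\to1$. Yet $\mathrm{WTA}(M)-\mathrm{WTP}(M)=M(\lambda^2-\lambda^{-2})\sim4\sqrt M\to\infty$. In this example the gap vanishes if and only if $M\,|\lambda(M)-1|\to0$, which the illustrative path $1+0.2/(1+m)^2$ happens to satisfy. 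In general, with $v_g=v_\ell=v$, the mean value theorem gives
\begin{equation*}
\mathrm{WTA}(M)-\mathrm{WTP}(M)=\bigl(\lambda(M)-\lambda(M)^{-1}\bigr)\frac{v(M)}{v'(t_M)}
\end{equation*}
for some $t_M$ between $\mathrm{WTP}(M)$ and $\mathrm{WTA}(M)$. So the lemma, and the paper's proof of it, needs an explicit rate hypothesis. For $\lambda\ge1$, for example, it suffices that $(\lambda(M)-1)\,v(M)/v'(\mathrm{WTA}(M))\to0$, and the conclusion is then immediate from the display. The missing ingredient is this rate condition, not a normalization.
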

	
	\begin{proof}
		By \eqref{eq:WTP_Equation} and \eqref{eq:WTA_Equation}, WTP and WTA are continuous functions of $\lambda$ whenever $v_g$ and $v_\ell$ are strictly increasing and continuous. As $\lambda(M)\to1$, both indifference equations converge to their neutral forms. Since the neutral benchmark has WTA$(M)=$WTP$(M)$, the gap vanishes.
	\end{proof}
	
	\subsection{Theorem}
	
	\begin{thm}
		Under the WRDU transaction assumptions above, there exists an admissible class of transaction paths such that:
		\begin{enumerate}
			\item For sufficiently small transactions, WTA $>$ WTP;
			\item Along neutralizing transaction paths, the gap attenuates: WTA $-$ WTP $\to0$.
		\end{enumerate}
	\end{thm}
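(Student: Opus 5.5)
The theorem is existential: it asserts that \emph{some} admissible class of transaction paths exists with the two properties. I would therefore prove it constructively. I would fix one admissible pair of subutilities and one path $m\mapsto\lambda(m)$, check that this path meets the hypotheses of \Cref{lem:SmallTransaction} and \Cref{lem:LargeTransaction}, and then invoke the two lemmas.

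The construction is as follows.
\begin{itemize}
\item Take the common normalization $v_g=v_\ell=v$, with $v$ strictly increasing, strictly concave, $C^2$ and $v(0)=0$; for instance $v(x)=\sqrt{x}$ as in \cref{sec:Illustrative_example}.
\item Take any continuous path $\lambda:(0,\infty)\to(0,\infty)$ with $\lambda(m)>1$ for all small $m$ and $\lambda(m)\to1$ as $m\to\infty$; for instance $\lambda(m)=1+0.2/(1+m)^2$.
\item Set $\rho(m)=1/\lambda(m)$. This lies in $(0,1)$ and tends to $1$, consistent with the Lagrangian penalty convention of \eqref{eq:WRDU_Representation}.
\end{itemize}
The admissible class is then all paths of this form, meaning all pairs $(v,\lambda(\cdot))$ satisfying these conditions.

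\textbf{Part 1.} By \eqref{eq:WTP_Equation} and \eqref{eq:WTA_Equation},
\[
\mathrm{WTP}(m)=v^{-1}\!\bigl(v(m)/\lambda(m)\bigr),\qquad
\mathrm{WTA}(m)=v^{-1}\!\bigl(\lambda(m)v(m)\bigr).
\]
These require the arguments to lie in the range of $v$. That holds automatically if $v$ is unbounded, as for $\sqrt{x}$; otherwise I would restrict to small $m$ so that $\lambda(m)v(m)$ stays below $\sup v$. Since $\lambda(m)>1$ and $v(m)>0$, we have $v(m)/\lambda(m)<v(m)<\lambda(m)v(m)$. Strict monotonicity of $v^{-1}$ then gives $\mathrm{WTP}(m)<m<\mathrm{WTA}(m)$. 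This is exactly the ordered inverse-utility comparison assumed in \Cref{lem:SmallTransaction}, so item 1 follows.

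\textbf{Part 2.} With $v_g=v_\ell$, the neutral benchmark $\lambda=1$ gives $\mathrm{WTA}=\mathrm{WTP}=m$, which is the remaining hypothesis of \Cref{lem:LargeTransaction}. The real work is here. The lemma's continuity argument alone is not enough, because $m$ itself diverges: pointwise continuity of $v^{-1}$ in $\lambda$ does not control
\[
v^{-1}\bigl(\lambda(M)v(M)\bigr)-v^{-1}\bigl(v(M)/\lambda(M)\bigr)
\]
uniformly as $M\to\infty$. For $v=\sqrt{x}$ the gap equals $M\bigl(\lambda(M)^2-\lambda(M)^{-2}\bigr)$, so it vanishes exactly when $\lambda(M)-1=o(1/M)$. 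I would therefore add a rate condition to the definition of the admissible class. For general $v$, a mean-value bound gives
\[
\mathrm{WTA}(M)-\mathrm{WTP}(M)\le \frac{\bigl(\lambda(M)-\lambda(M)^{-1}\bigr)\,v(M)}{\inf v'},
\]
where the infimum of $v'$ is taken over the relevant interval $[\mathrm{WTP}(M),\mathrm{WTA}(M)]$. The rate condition is that this right-hand side tends to zero. The quadratic example $\lambda(m)=1+0.2/(1+m)^2$ satisfies it. Incorporating the rate condition into the class is the main obstacle, and it shows that the admissible class must be chosen with care; once it is in place, item 2 follows.
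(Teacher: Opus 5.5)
Your proof is correct, and it departs from the paper at the step that matters. The paper proves the theorem in one line by citing \Cref{lem:SmallTransaction} and \Cref{lem:LargeTransaction}. The first lemma simply assumes that the inverse-utility comparisons are ordered. The second argues that WTA and WTP are continuous in $\lambda$, so $\lambda(M)\to1$ pushes the gap to its neutral value zero. You instead build an explicit witness ($v_g=v_\ell=v$, e.g.\ $v(x)=\sqrt{x}$, with the path $\lambda(m)=1+0.2/(1+m)^2$). For part 1 you derive the ordering $\mathrm{WTP}(m)<m<\mathrm{WTA}(m)$ rather than assuming it. For part 2 you replace the continuity argument with a mean-value bound and a rate condition.

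Your objection to the continuity argument is well founded. Continuity in $\lambda$ at each fixed $M$ gives no control that holds uniformly as $M\to\infty$. Take $v=\sqrt{x}$ and $\lambda(M)=1+M^{-1/2}$. Then $\lambda(M)\to1$ and the neutral benchmark gives $\mathrm{WTA}=\mathrm{WTP}$, yet $\mathrm{WTA}(M)-\mathrm{WTP}(M)=M(\lambda^2-\lambda^{-2})\sim4\sqrt{M}\to\infty$. So \Cref{lem:LargeTransaction} is false as stated, and part 2 of the theorem is secured only by your route.

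The paper's route looks more general and free of any particular normalization, but its hypotheses are informal and its attenuation step does not close. Your route gives up that generality by fixing $v_g=v_\ell$ and building the rate condition into the admissible class. This costs nothing here, because the theorem is existential and your chosen path satisfies the condition.

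One small tidy-up: for bounded $v$, such as $v(x)=1-e^{-x}$, your path eventually gives $\lambda(M)v(M)>\sup v$. Then $\mathrm{WTA}(M)$ is undefined and the rate condition cannot be stated. The class should therefore also require $\lambda(M)v(M)$ to stay in the range of $v$. Your $\sqrt{x}$ witness satisfies this automatically.
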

	
	\begin{proof}
		The proof follows directly from the two preceding lemmas.
	\end{proof}

	\section{Appendix: Illustrative WTA-WTP Plots}

	The following figures reproduce the numerical illustrations from Section 5. They are not calibrations; they are graphical vignettes showing how the WRDU loss-aversion index, the Lagrangian penalty coefficient, and the WTA-WTP wedge move along the illustrative transaction path.

	\begin{figure}[!htpb!]
		\centering
		\captionof{figure}{Range-dependent WRDU loss-aversion index in the illustrative WTA-WTP example}
		\label{fig:WTA_WTP_lambda_surface}
		\includegraphics[page=1,width=0.92\textwidth]{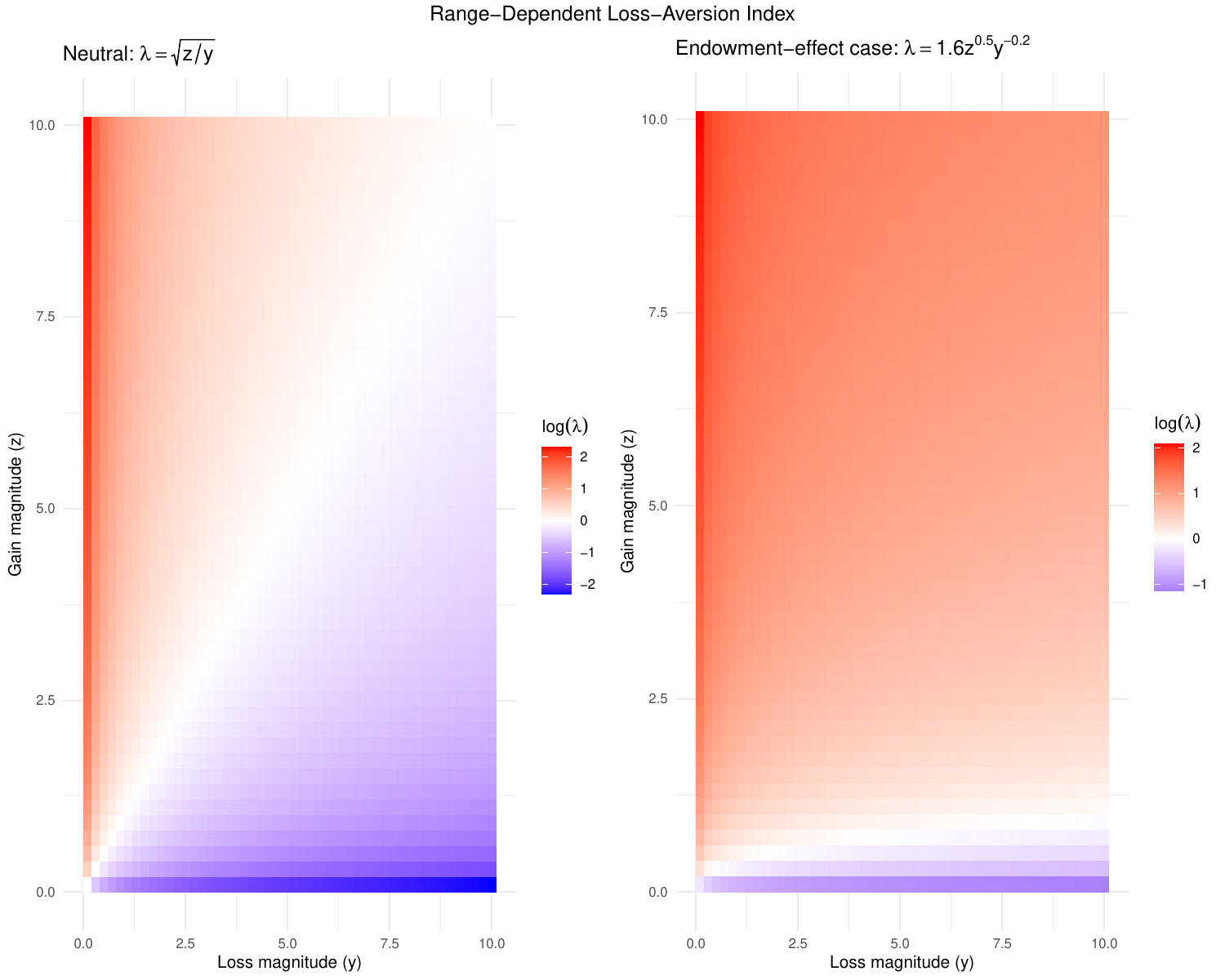}
		\begin{minipage}{0.85\textwidth}
			The two panels plot $\log(\lambda)$ over loss and gain magnitudes. The neutral case uses identical square-root gain and loss subutilities, while the endowment-effect case uses $v_g(x)=\sqrt{x}$ and $v_\ell(x)=x^{0.8}$. The figure shows that $\lambda=v_\ell'(y)/v_g'(z)$ is range-dependent rather than a fixed primitive.
		\end{minipage}
	\end{figure}

	\begin{figure}[!htpb!]
		\centering
		\captionof{figure}{Lagrangian WRDU penalty coefficient in the illustrative WTA-WTP example}
		\label{fig:WTA_WTP_rho_surface}
		\includegraphics[page=2,width=0.92\textwidth]{WRDU_WTA_WTP_Figures.pdf}
		\begin{minipage}{0.85\textwidth}
			The panels plot the Lagrangian penalty coefficient $\rho=1/\lambda$. Under the standard WTA-WTP loss-aversion convention, $\lambda>1$ is loss aversion, while $\lambda<1$ is gain seeking or loss attenuation. Therefore $\rho>1$ identifies the gain-seeking region and penalizes excessive-risk-seeking paths; in the WTA-WTP loss-aversion region, $\rho<1$ and moves toward one as the wedge attenuates.
		\end{minipage}
	\end{figure}

	\begin{figure}[!htpb!]
		\centering
		\captionof{figure}{Attenuation path for $\lambda(m)$, $\rho(m)$, WTP, WTA, and the WTA-WTP gap}
		\label{fig:WTA_WTP_attenuation_path}
		\includegraphics[page=3,width=0.92\textwidth]{WRDU_WTA_WTP_Figures.pdf}
		\begin{minipage}{0.85\textwidth}
			The four panels report the illustrative transaction path used to visualize attenuation. Along the path, $\lambda(m)$ moves toward one and the Lagrangian penalty coefficient $\rho(m)=1/\lambda(m)$ also moves toward its neutral value. Under the common normalization used in the illustration, WTA and WTP converge and the WTA-WTP wedge becomes small.
		\end{minipage}
	\end{figure}

	\begin{figure}[!htpb!]
		\centering
		\captionof{figure}{Summary of WRDU attenuation in the illustrative WTA-WTP example}
		\label{fig:WTA_WTP_summary}
		\includegraphics[page=4,width=0.88\textwidth]{WRDU_WTA_WTP_Figures.pdf}
		\begin{minipage}{0.85\textwidth}
			The summary plot overlays the path of $\lambda$, the Lagrangian penalty coefficient $\rho$, and scaled WTA and WTP values. The figure is intended only as a compact visualization of the mechanism: $\lambda$ is high enough to generate a small-scale WTA-WTP wedge, while $\rho=1/\lambda$ records the Lagrangian penalty and both indices return toward neutrality as the transaction path widens.
		\end{minipage}
	\end{figure}

	\clearpage

\singlespace
\bibliographystyle{chicago}
\addcontentsline{toc}{section}{References}
\bibliography{../LossAversionGeneral}

\end{document}